\newtheorem{theorem}{Theorem}
\newtheorem{corollary}{Corollary}
\newtheorem{example}{Example}
\renewcommand{\algorithmicrequire}{\textbf{Input:}}
\renewcommand{\algorithmicensure}{\textbf{Output:}}
	\DeclareSymbolFont{largesymbolsA}{U}{txexa}{m}{n}
	\newcommand{\pref}{\succ \xspace}
	\newcommand{\Indiff}[1][]{
		\ifthenelse{\equal{#1}{}}{\mathrel I}{\mathop{I_{#1}}}
	}
	\newcommand{\prefset}[1][]{\ifthenelse{\equal{#1}{}}{\mathcal{R}}{\mathcal{R}_{#1}}}
			\newcommand{\sPref}[1][]{                  
				\ifthenelse{\equal{#1}{}}{\mathrel P}{\mathop{P_{#1}}}
			}
			\tikzset{
			  treenode/.style = {align=center, inner sep=0pt, text centered,
			    font=\sffamily},
			  arn_n/.style = {treenode, circle, white, font=\sffamily\bfseries, draw=black,
			    fill=black, text width=1.5em},
			  arn_r/.style = {treenode, circle, red, draw=red,
			    text width=1.5em, very thick},
			  arn_x/.style = {treenode, rectangle, draw=black,
			    minimum width=0.5em, minimum height=0.5em}
			}
			\newcommand{\midd}{\mathbin{:}}
			\newcommand{\reals}{\ensuremath{\mathbb{R}}}
			\newcommand{\citet}[1]{\citeauthor{#1}~\shortcite{#1}}
			\newcommand{\citep}{\cite}
\definecolor{kentuckyblue}{RGB}{0, 93, 170}			
\definecolor{green}{RGB}{0, 102, 0}					
\definecolor{frenchred}{RGB}{250,60,50}				
\definecolor{unswyellow}{RGB}{255,155,0}			
\definecolor{bavblue}{RGB}{1,153,213}				
\begin{document}

\title{Equilibria Under the Probabilistic Serial Rule}
%

\author{Haris Aziz \and  Serge Gaspers  \and Simon Mackenzie \and Nicholas Mattei\\
NICTA and UNSW,
Sydney, Australia \\
\{haris.aziz, serge.gaspers, simon.mackenzie, nicholas.mattei\}@nicta.com.au\\
\AND Nina Narodytska\\ Carnegie Mellon University\\
ninan@gmail.com
\And Toby Walsh\\
NICTA and UNSW,
Sydney, Australia\\
toby.walsh@nicta.com.au
}

\maketitle
\begin{abstract}
The probabilistic serial (PS) rule is a prominent randomized rule for assigning
indivisible goods to agents. Although it is well known for its good fairness and welfare properties, it is not strategyproof. In view of this, we address several fundamental questions regarding equilibria under PS. Firstly, we show that Nash deviations under the PS rule can cycle. Despite the possibilities of cycles, we prove that a pure Nash equilibrium is guaranteed to exist under the PS rule. We then show that verifying whether a given profile is a pure Nash equilibrium is coNP-complete, and computing a pure Nash equilibrium is NP-hard. For two agents, we present a linear-time algorithm to compute a pure Nash equilibrium which yields the same assignment as the truthful profile. Finally, we conduct experiments to evaluate the quality of the equilibria that exist under the PS rule, finding that the vast majority of pure Nash equilibria yield social welfare that is at least that of the truthful profile. 
\end{abstract}

		\section{Introduction}
		



Resource allocation is a fundamental and widely applicable area within AI and computer
science. When resource allocation rules are not strategyproof and agents do not have incentive to report their preferences truthfully, it is important to understand the possible manipulations; Nash dynamics; and the existence and computation of equilibria.

In this paper we consider the  \emph{probabilistic serial (PS)} rule for the \emph{assignment problem}. 
In the \emph{assignment problem} we have a possibly unequal number of agents and objects where
the agents express preferences over 
objects and, based on these preferences, the objects are allocated to the agents~\citep{AGMW14a,BoMo01a,Gard73b,HyZe79a}. The model is applicable to many 
resource allocation and fair division settings where the objects may be public houses, school seats, course enrollments, kidneys for transplant, car park spaces, chores, joint assets, or time slots in schedules. 
The \emph{probabilistic serial (PS)} rule is a randomized (or fractional) assignment rule.
A randomized or fractional assignment rule takes the preferences of the agents into account in order to allocate each agent a fraction of the object. If the objects are indivisible but allocated in a randomized way, the fraction can also be interpreted as the probability of receiving the object. 
Randomization is widespread in resource 
allocation as it is a natural way to ensure procedural fairness~\citep{BCKM12a}.

A prominent randomized assignment rule is the PS rule~\citep{BoHe12a,BoMo01a,BCKM12a,KaSe06a,Koji09a,Yilm10a,SaSe13b}.  
PS works as follows: each agent expresses a linear order over the set of houses.\footnote{We use 
the term house throughout the paper though we stress any object could be allocated with these mechanisms.}
Each house is considered to have a divisible probability weight of one. Agents simultaneously and at 
the same speed eat the probability weight of their most preferred house that has not yet been completely
eaten. Once a house has been 
completely eaten by a subset of the agents, each of these agents starts eating his next most preferred 
house that has not been completely eaten (i.e., they may ``join'' other agents already eating a different
house or begin eating new houses). The procedure terminates after all the houses have been completely eaten. 
The random allocation of an agent by PS is the amount of each object he has eaten. 
Although PS was originally defined for the setting where the number of houses is equal to the 
number of agents, it can be used without any modification for any number of  
houses relative to the number agents~\citep{BoMo01a,Koji09a}.

In order to compare random allocations, an agent needs to consider relations between them. 
We consider two well-known relations between random allocation~\cite{ScVa12a,SaSe13b,Cho12a}:
$(i)$ \textit{expected utility (EU)}, 
and $(ii)$ \textit{downward lexicographic (DL)}. 
For EU, an agent prefers an allocation that yields more expected utility. 
For DL, an agent prefers an allocation that gives a higher probability to the most preferred 
alternative that has different probabilities in the two allocations. 
Throughout the paper, we assume that agents express \emph{strict} preferences over houses, 
i.e., they are not indifferent between any two houses. 

The PS rule fares well in terms of 
fairness and welfare~\citep{BoHe12a,BoMo01a,BCKM12a,Koji09a,Yilm10a}.
It satisfies strong envy-freeness and efficiency with respect to the DL 
relation~\citep{BoMo01a,ScVa12a,Koji09a}. Generalizations of the PS rule have been 
recommended and applied in many settings~\citep{AzSt14a,BCKM12a}.  The PS rule also satisfies some 
desirable incentive properties: if the number of objects is at most the number of agents, then PS is  
DL-strategyproof~\citep{BoMo01a,ScVa12a}. 
Another well-established rule, \emph{random serial dictator (RSD)}, is not envy-free, not as efficient 
as PS~\citep{BoMo01a}, and the fractional allocations under RSD are \#P-complete 
to compute~\citep{ABB13b}. 

Although PS performs well in terms of fairness and welfare, unlike RSD, it is not strategyproof. \citet{AGM+15c} showed
that, in the scenario where one agent is strategic, computing his best
response (manipulation) under complete information of the other agents' strategies is NP-hard for 
the EU relation, but polynomial-time computable for the DL relation.
In this paper, we consider the situation where \emph{all} agents are strategic. We especially focus on pure Nash equilibria (PNE) --- reported preferences profiles for which no agent has an incentive to report a different preference.
We examine the following natural questions for the first time: \emph{(i) What is the nature of best response dynamics under the PS rule? (ii) Is a (pure) Nash equilibrium always guaranteed to exist? (iii) How efficiently can a (pure) Nash equilibrium be computed?  (iv) What is the difference in quality of the various equlibria that are possible under the PS rule?} In related work,
\citet{EkKe12a} showed that when agents are not truthful, the outcome of 
PS may not satisfy desirable properties related to efficiency and envy-freeness. 
\citet{HeMa12a} provided a necessary and sufficient condition for implementability of Nash equilibrium  for the random assignment problem. 

\bigskip
\noindent
\textbf{Contributions.\;} 
For the PS rule we show that expected utility best responses can cycle for \emph{any} cardinal utilities consistent with the ordinal preferences. This is significant as Nash dynamics in matching theory 
has been an active area of research, especially for the stable matching problem~\cite{AGM+11a}, and the 
presence of a cycle means that following a sequence of best responses is not guaranteed to result in an 
equilibrium profile.
We then prove that a pure Nash equilibrium (PNE) is guaranteed to exist for any number of agents and houses and any utilities.
To the best of our knowledge, this is the first proof of the existence of a Nash equilibrium for the PS rule. 
For the case of two agents we present a linear-time algorithm to compute a preference profile that is in PNE with respect to the original preferences.
We show that the general problem for computing a PNE is NP-hard. Finally, we run a set of 
experiments on real and synthetic preference data to evaluate the welfare achieved by PNE profiles compared to the welfare achieved under the truthful profile. 

		\section{Preliminaries}


		An \emph{assignment problem} $(N, H, \pref)$ consists  of a set of agents $N=\{1,\ldots, n\}$, a set of houses $H=\{h_1, \ldots, h_m\}$ and a preference profile $\pref=(\pref_1,\ldots, \pref_n)$ in which $\pref_i$ denotes a complete, transitive and strict ordering on $H$ representing the preferences of agent $i$ over the houses in  $H$.
			A \emph{fractional assignment} is an $(n\times m)$ matrix $[p(i)(h_j)]_{\substack{1\leq i\leq n, 1\leq j\leq m}}$ such that for all $i\in N$, and $h_j\in H$, $0\leq p(i)(h_j)\leq 1$;  and for all $j\in \{1,\ldots, m\}$, $\sum_{i\in N}p(i)(h_j)= 1$. 
		The value $p(i)(h_j)$ is the fraction of house $h_j$ that agent $i$ gets. Each row $p(i)=(p(i)(h_1),\ldots, p(i)(h_m))$ represents the \emph{allocation} of agent $i$.
	A fractional assignment can also be interpreted as a random assignment where $p(i)(h_j)$ is the probability of agent $i$ getting house $h_j$.


			Given two random assignments $p$ and $q$, $p(i) \pref_i^{DL} q(i)$ i.e.,  a player $i$ \emph{DL~(downward lexicographic) prefers}  allocation $p(i)$ to $q(i)$ if $p(i)\neq q(i)$ and for the most preferred house $h$ such that $p(i)(h)\neq q(i)(h)$, we have that $p(i)(h)>q(i)(h)$.
When agents are considered to have cardinal utilities for the objects, we denote by $u_i(h)$ the utility that agent $i$ gets from house $h$. We will assume that the total utility of an agent equals the sum of the utilities that he gets from each of the houses. Given two random assignments $p$ and $q$, $p(i) \pref_i^{EU} q(i)$, i.e.,  a player $i$ \emph{EU (expected utility)~prefers} allocation $p(i)$ to $q(i)$ if
			$\sum_{h\in H}u_i(h) \cdot p(i)(h)> \sum_{h\in H}u_i(h) \cdot q(i)(h).$
		Since for all $i\in N$, agent $i$ compares assignment $p$ with assignment $q$ only with respect to his allocations $p(i)$ and $q(i)$, we will sometimes abuse the notation and use $p\pref_i^{EU} q$ for $p(i)\pref_i^{EU} q(i)$.

		A \emph{random assignment rule} takes as input an assignment problem $(N,H,\pref)$ and returns a random assignment which specifies what fraction or probability of each house is allocated to each agent.
We will primarily focus on the expected utility setting but will comment 
on and use DL wherever needed.

\paragraph{The Probabilistic Serial Rule and Equilibria.}
The \emph{Probabilistic Serial (PS) rule} is a random assignment algorithm in which we consider each house as infinitely divisible.
At each point in time, each agent is eating (consuming the probability mass of) his most preferred house that has not been completely eaten. Each agent eats at the same unit speed. Hence all the houses are eaten at time $m/n$ and each agent receives a total of $m/n$ units of houses.
The probability of house $h_j$ being allocated to $i$ is the fraction of house $h_j$ that $i$ has eaten. The PS fractional assignment can be computed in time $O(mn)$.
We refer the reader to \citet{BoMo01a} or \citet{Koji09a} for alternative definitions of PS. 
 The following example from \citet{BoMo01a,AGM+15c} shows how PS works.

\begin{example}[PS rule]\label{example:PS}
	Consider an assignment problem with the following preference profile.
\begin{align*}
	\centering
	\succ_1:&\quad h_1,h_2,h_3 \\ \succ_2:&\quad h_2,h_1,h_3 \\ \succ_3:&\quad  h_2,h_3,h_1
	\end{align*}
	Agents $2$ and $3$ start eating $h_2$ simultaneously whereas agent $1$ eats $h_1$. When $2$ and $3$ finish $h_2$, agent $3$ has only eaten half of $h_1$.  The timing of the eating can be seen below.
\begin{center}
             \begin{tikzpicture}[scale=0.2]
                 \centering
                 \draw[-] (0,0) -- (0,6);
                 \draw[-] (0,0) -- (20,0);

                 \draw[-] (20,6) -- (20,0);

\draw[-] (0,2) -- (20,2);
\draw[-] (0,4) -- (20,4);
\draw[-] (20,0) -- (20,6);

\draw[-] (10,0) -- (10,6);

\draw[-] (0,6) -- (20,6);

\draw[-] (15,0) -- (15,6);

                                        \draw (0,-.8) node(c){\small $0$};
                             \draw (20/2,-1.2) node(c){\small $\frac{1}{2}$};

 \draw (20/2,-2.5) node(c){\small Time};

                             \draw (20,-1) node(c){\small$1$};

\draw (15,-1.2) node(c){\small$\frac{3}{4}$};

    \draw(-3,6) node(z){\small Agent $1$};
                 \draw(-3,4) node(z){\small Agent $2$};
                 \draw(-3,2) node(z){\small Agent $3$};

\draw(5,6.8) node(z){\small $h_1$};

\draw(5,4.8) node(z){\small $h_2$};

\draw(5,2.8) node(z){\small $h_2$};

\draw(12.5,6.8) node(z){\small $h_1$};

\draw(12.5,4.8) node(z){\small $h_1$};

\draw(12.5,2.8) node(z){\small $h_3$};

\draw(17.5,6.8) node(z){\small $h_3$};

\draw(17.5,4.8) node(z){\small $h_3$};

\draw(17.5,2.8) node(z){\small $h_3$};
  \end{tikzpicture}
\end{center}

\noindent
	The final allocation computed by PS is
\[
PS(\succ_1,\succ_2,\succ_3)=\begin{pmatrix}
	3/4&0&1/4\\
  1/4&1/2& 1/4 \\
  0&1/2 &  1/2

 \end{pmatrix}.
\]

\end{example}

Consider the assignment problem in Example~\ref{example:PS}. If agent $1$ misreports his preferences as follows: $\succ_1':\quad h_2,h_1,h_3,$ then \[
		PS(\succ_1',\succ_2,\succ_3)=\begin{pmatrix}
			1/2&1/3&1/6\\
		  1/2 & 1/3 & 1/6 \\
		  0 & 1/3 & 2/3
			\end{pmatrix}.
		\]
		\noindent
		If we suppose that $u_1(h_1)=7$, $u_1(h_2)=6$, and $u_1(h_3)=0$, then agent $1$ gets more expected utility when he reports $\succ_1'$. In the example,
the truthful profile is in PNE with respect to DL preferences but not expected utility.

 We study the existence and computation of Nash equilibria.
For a preference profile $\pref$, we denote by $(\pref_{-i},\pref_i')$ the preference profile obtained from $\pref$ by replacing agent $i$'s preference by $\pref_i'$.


\section{Nash Dynamics}




When considering Nash equilibria of any setting, one of the most natural ways of proving that a PNE always exists is to show that better or best responses do not cycle which implies that eventually, Nash dynamics terminate at a Nash equilibrium profile. Our first result is that DL and EU best responses can cycle. For EU best responses, this is even the case when agents have Borda utilities.  

\begin{theorem}\label{th:cycle}
With 2 agents and 5 houses where agents have Borda utilities, 
EU best responses can lead to a cycle in the profile.
\end{theorem}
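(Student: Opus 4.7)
The plan is to exhibit an explicit cycle in the EU best-response graph. With 2 agents and 5 houses under Borda utilities $(4,3,2,1,0)$, the space of pure strategy profiles has $(5!)^2 = 14400$ elements, so an exhaustive search is feasible and the proof reduces to constructing a concrete example and verifying it.

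The verification itself is almost entirely mechanical, thanks to the simple form PS takes with only two eaters. At any moment each agent eats their highest still-available house at unit speed, so a PS run can be read off as a race on two tracks: whenever the top remaining house is shared, it is split 50/50 and exhausted at the next half-integer time; otherwise it is eaten whole by its sole consumer. Hence for any pair of reported rankings the PS matrix can be written down by inspection, and agent $1$'s best response against a fixed report $\pref_2$ is computed by iterating over all $120$ possible orderings, evaluating agent $1$'s PS allocation and its dot product with the true Borda vector $(4,3,2,1,0)$, and taking the argmax.

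My proposal for the structure of the proof is therefore: (i) present in a table a sequence of profiles $P_1 \to P_2 \to \cdots \to P_k = P_1$ indicating, at each step, the deviating agent and their new ranking; (ii) compute, for every $P_i$, the PS allocation of both agents via the eating algorithm; (iii) show that the EU of the deviating agent strictly increases at each transition, so none of the profiles is a PNE and every deviation is profitable; and (iv) check, for each transition, that no alternative reordering by the deviating agent yields a strictly higher EU than the chosen response, so the transition is indeed a \emph{best} response rather than merely an improving one. Point (iv) is where one actually uses the Borda assumption, since it pins down a specific cardinalization.

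The main obstacle is not the arithmetic but locating a cycle in which every edge is a best response. Improving-response cycles are easy to produce, but best responses tend to ``overshoot'' into locally attractive regions of the profile space, so most short candidate cycles break when one of the proposed deviations is dominated by some other reordering. I expect the cycle will have length $4$ (one deviation per agent, returning to the start), be discovered by computer search rather than by a slick combinatorial argument, and admit a compact hand-checkable certificate consisting of four $2\times 5$ allocation matrices together with their inner products against $(4,3,2,1,0)$.
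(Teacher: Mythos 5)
Your overall strategy is exactly the paper's: the proof is a concrete witness --- an explicit sequence of reported profiles for 2 agents and 5 houses under Borda utilities in which each transition is a profitable deviation, verified by running the (very simple, two-eater) PS computation at each step. Your observations about how to carry out the verification are sound, and your guess that the cycle has length $4$ matches the paper: the published example starts from a truthful profile, takes one deviation by agent $1$ into the cycle, and then alternates deviations by agents $2$ and $1$ until the profile of Step 1 recurs after four further moves.

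The gap is that you have not actually produced the witness. For an existential claim of this kind, the example \emph{is} the proof; a description of the search procedure that would find it, together with a prediction of what it will look like, does not establish the theorem. Everything in your points (i)--(iv) is a correct account of what a complete proof must contain, but none of it is instantiated: there are no concrete rankings, no allocation matrices, and no expected utilities. Until you exhibit the profiles and check the arithmetic (the paper's instance uses true preferences $h_2,h_3,h_5,h_4,h_1$ and $h_5,h_3,h_4,h_1,h_2$ with Borda utilities, and the deviations cycle through reports such as $h_3,h_4,h_2,h_1,h_5$ and $h_3,h_4,h_5,h_1,h_2$), the argument is a research plan rather than a proof. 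One further caution for when you do carry out step (iv): the paper claims the stronger fact that each move is a best response for \emph{every} cardinalization consistent with the ordinal preferences (and is also a DL best response), whereas the theorem as stated only needs Borda; if your search certifies best responses only against the Borda vector, say so explicitly, since that is all the statement requires.
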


\begin{proof}
The following 5 step sequence of best responses leads
to a cycle.  We use $U$ to denote
the matrix of utilities of the agents over the houses such that
$U[1,1]$ is the utility of agent $1$ for house $h_1$.
Note that $P$ starts as the truthful reporting in our example.
The initial preferences and utilities of the agents are:

\noindent
\begin{minipage}{.48\columnwidth}
\begin{align*}
	\succ_1:\quad & h_2,h_3,h_5,h_4,h_1  \\
	\succ_2:\quad & h_5,h_3,h_4,h_1,h_2  
 \end{align*}
\end{minipage}
\begin{minipage}{.48\columnwidth}
\[
 U_0 = \begin{pmatrix}
	0 & 4 & 3 & 1 & 2 \\
	1 & 0 & 3 & 2 & 4  
 \end{pmatrix}.
\]
\end{minipage}

\smallskip
\noindent
This yields the following allocation and utilities at the start:

\noindent
\[
 PS(\succ_1,\succ_2) = \begin{pmatrix}
	1/2 	& 1	 & 1/2	 & 1/2	& 0	 \\
	1/2 	& 0	 & 1/2	 & 1/2	& 1	 \\
 \end{pmatrix},  EU_0 = \begin{pmatrix}
	6  \\
 	7 
 \end{pmatrix}.
\]

\noindent
In Step 1, agent 1 deviates to increase his utility.
He reports the preference $\succ_1': h_3,h_4,h_2,h_1,h_5$;
which results in
\[
 PS(\succ_1',\succ_2) = \begin{pmatrix}
	0 	& 1	 & 1	 	& 1/2		& 0	 \\
	1 	& 0	 & 0		& 1/2		& 1	 \\
 \end{pmatrix}, EU_1 = \begin{pmatrix}
	7.5  \\
 	6 
 \end{pmatrix}.
\]

\noindent
In Step 2, agent 2 changes his report to
$\succ_2': h_3,h_4,h_5,h_1,h_2.$
This increases his utility to 7 and decreases
the utility of agent 1 to 6.

In Step 3, Agent 1 changes his report
to $\succ_1'': h_3,h_5,h_2,h_1,h_4.$ This 
increases the utility of agent 1 to 7.5 and
decreases the utility of agent 2 to 4.5.

In Step 4, Agent 2 changes his report to
$\succ_2'': h_5,h_3,h_4,h_1,h_2.$ which 
increases his expected utility to 6.5 while 
decreasing the expected utility of agent 1
to 7.

In Step 5, Agent 1 changes his report
to $\succ_1''': h_3, h_4, h_2, h_1, h_5.$
Notice that $\succ_1''' = \succ_1'$ and 
$\succ_2'' = \succ_2$.
%
%
This is the same profile as the one of Step 1, so we have 
cycled. 
\end{proof}
It can be verified that every response in this example is both an EU best response (with respect to any cardinal utilities consistent with the ordinal preferences) and also 
a DL best response.  Hence,  DL best responses and EU best responses (with respect to any cardinal utilities consistent with the ordinal preferences) can cycle.

The fact that best responses can cycle means that simply following best responses need not result in a PNE. Hence the normal form game induced by the PS rule is not a potential game~\citep{MoSh96a}.
%
Checking whether an instance has a Nash equilibrium appears to be a challenging problem. The naive method requires going through $O({m!}^n)$ profiles, which is super-polynomial even when $n=O(1)$ or $m=O(1)$.

\section{Existence of Pure Nash Equilibria}
%
Although it seems that computing a Nash equilibrium is a challenging problem (we give hardness results in the next section),
we show that at least one (pure) Nash equilibrium is guaranteed to exist for any number of houses, any number of agents, and any preference relation over fractional allocations.\footnote{We already know from Nash's original result that a \emph{mixed} Nash equilibrium exists for any game.} The proof relies on showing that the PS rule can be modelled as a perfect information extensive form game. 	

\begin{theorem}
A PNE is guaranteed to exist under the PS rule for any number of agents and houses, and for any relation between allocations.
\end{theorem}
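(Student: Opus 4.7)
The plan is to encode the PS eating procedure itself as a finite perfect-information extensive form game $G$, apply backward induction to obtain a subgame perfect equilibrium, and then read off a preference profile that will be shown to be a PNE in the normal form game induced by PS.

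First I would construct $G$ as follows. A state records the set $S\subseteq H$ of houses not yet completely eaten together with, for every agent, the house he is currently consuming. Play is punctuated by \emph{events}, each corresponding to the moment a house runs out. At such an event, every agent whose current house has just vanished must pick a new house from $S$; I break the simultaneity of these picks by an arbitrary but fixed ordering on agents, which makes $G$ a game of perfect information. Because $S$ strictly shrinks at every event, the tree is finite with at most $m$ events and at most $n$ sequential moves per event. Payoffs at terminal leaves are the fractional PS allocations, ordered by whichever relation over allocations (EU, DL, or any other) is under consideration; this generality is important because the theorem is stated for any relation. Backward induction then yields a subgame perfect equilibrium $\sigma^{*}$.

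Next I would translate $\sigma^{*}$ into a preference profile. For each agent $i$, list the houses $i$ consumes along the equilibrium play of $\sigma^{*}$ in the order he consumes them and append any remaining houses at the bottom in an arbitrary fixed order; call the result $R_i$. By construction, running PS on $R=(R_1,\ldots,R_n)$ replays the equilibrium path of $G$ exactly, so every agent receives his SPE allocation.

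Finally I would prove that $R$ is a PNE. Suppose toward contradiction that some agent $i$ strictly improves by reporting $R_i'$. Consider the extensive form strategy $\sigma_i'$ that at every decision node of $i$ picks the most $R_i'$-preferred house still available. The outcome of $(\sigma_i',\sigma_{-i}^{*})$ in $G$ should agree with the PS outcome of $(R_i',R_{-i})$, because the opponents' SPE strategies should prescribe the same choices as the preference-based counterparts derived from $R_{-i}$. Since $\sigma_i^{*}$ is a best response to $\sigma_{-i}^{*}$, no such improvement can exist, contradicting the assumption.

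The hard part will be justifying that last matching: one needs the SPE to be chosen so that every opponent's strategy $\sigma_j^{*}$ coincides with the behaviour of the single linear order $R_j$ not only on the equilibrium path but also at the off-path nodes reached once $i$ deviates. The natural fix is to apply backward induction with a consistent tie-breaking rule so that each agent's choice at a node depends only on the current set of available houses---yielding a \emph{Markov} and preference-based SPE---and then verify that such an SPE always exists. I expect this consistency argument, rather than the invocation of backward induction, to carry the real technical weight of the proof.
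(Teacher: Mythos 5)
Your proposal follows essentially the same route as the paper: model the PS eating procedure as a finite perfect-information extensive-form game, obtain a subgame perfect Nash equilibrium by backward induction, and read a preference profile off the equilibrium path (placing uneaten houses at the end). The only notable difference is in the formalization---the paper discretizes time uniformly into sub-stages of length $g/n$, where $g$ is the GCD of all inter-event gaps over all $m!^n$ profiles, with agents moving in a fixed round-robin order, rather than placing decision nodes only at house-exhaustion events---and the off-path consistency issue you correctly flag as the technical crux is likewise left implicit in the paper's proof sketch.
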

		
		\begin{proof}[Proof sketch]

	Consider running PS on all possible ${m!}^n$ preference profiles for $n$ agents and $m$ objects. In each profile $i$, let $t_i^1,\dots, t_i^{k_i}$ be the $k_i$ different time points in the PS algorithm run for the $i$-th profile when at least one house is finished. Let $g=\text{GCD}(\{t_i^{j+1}-t_i^j\midd j\in \{1,\ldots, k_i-1\}, i\in \{1,\dots,m!^n\})$ where GCD denotes the greatest common divisor. Since in each profile $i$, $t_i^{j+1}-t_i^j>0$ for all $j\in \{0,\ldots, k_i-1\}$, we have that $g$ is finite and greater than zero.

	The time interval length $g$ is small enough such that each run of the PS rule can be considered to have $m/g$ stages of duration $g$. Each stage can be viewed as having $n$ sub-stages so that in each stage, agent $i$ eats $g/n$ units of a house in sub-stage $i$ of a stage.
		In each sub-stage only one agent eats $g/n$ units of the most favoured house that is available. Hence we now view PS as consisting of a total of $mn/g$ sub-stages and the agents keep coming in order $1,2,\ldots, n$ to eat $g$ units of the most preferred house that is still available. If an agent eats $g$ units of a house in a stage then it will eat $g$ units of the same house in his sub-stage of the next stage as long as the house has not been fully eaten.
		Consider a perfect information extensive form game tree.
		For a fixed reported preference profile, the PS rule unravels accordingly along a path starting at the root and ending at a leaf. Each level of the tree represents a sub-stage in which a certain agent has his turn to eat $g$ units of his most preferred available house. Note that there is a one-to-one correspondence between the paths in the tree and the ways the PS algorithm can be implemented, depending on the reported preference. 

		A subgame perfect Nash equilibrium (SPNE) is guaranteed to exist for such a game via backward induction:
		starting from the leaves and moving towards the root of the tree, the agent at the specific node chooses an action that maximizes his utility given the actions determined for the children of the node. The SNPE identifies at least one such path from a leaf to the root of the game. The path can be used to read out the most preferred house of each agent at each point. The information provided is sufficient to construct a preference profile that is in Nash equilibrium. Those houses that an agent did not eat at all can conveniently be placed at the end of the preference list. Such a preference profile is in Nash equilibrium. 
		\end{proof}

\section{Complexity of Pure Nash Equilibrium}


Our argument for the existence of a Nash equilibrium is constructive. However, naively constructing the extensive form game and then computing a subgame perfect Nash equilibrium requires exponential space and time. It is unclear whether a sub-game perfect Nash equilibrium or any Nash equilibrium preference profile can be computed in polynomial time.

\subsection{General Complexity Results}

In this section, we show that computing a PNE is NP-hard and verifying whether a profile is a PNE is coNP-complete. Recently it was shown that computing an expected utility best response is NP-hard~\citep{AGM+15b,AGM+15c}. Since equilibria and best responses are somewhat similar, one would expect that problems related to equilibria under PS are also hard. However, there is no general reduction from best response to equilibria computation or verification. In view of this, we prove results regarding PNE by closely analyzing the reduction in \citep{AGM+15b}.
First, we show that checking whether a given preference profile is in PNE under the PS rule is coNP-complete.

\begin{theorem}
	Given agents' utilities, checking whether a given preference profile is in PNE under the PS rule is coNP-complete.
	
	\end{theorem}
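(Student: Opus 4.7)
The plan is to establish membership in coNP directly, and then prove coNP-hardness by a reduction from the NP-hard expected-utility best response problem of Aziz et al.\ (AGM+15b).

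For the coNP upper bound, I would show that the complement lies in NP: guess an agent $i$ and a preference $\succ_i'$, compute $PS(\succ_{-i},\succ_i')$ and $PS(\succ)$ in polynomial time, and check whether $\sum_{h} u_i(h) \cdot PS(\succ_{-i},\succ_i')(i)(h) > \sum_{h} u_i(h) \cdot PS(\succ)(i)(h)$. Since PS runs in $O(mn)$ time and the expected utility comparison is arithmetic on polynomial-size rationals, this is a polynomial-time NP verifier.

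For hardness, I would reduce from the \emph{best response} decision problem under PS with expected utility: given a profile $\succ$, a distinguished agent $1$, utilities $u_1$, and a threshold $v$, decide whether agent $1$ has a report $\succ_1'$ with $\sum_h u_1(h)\cdot PS(\succ_{-1},\succ_1')(1)(h) \ge v$. AGM+15b show this is NP-hard. Given such an instance, I would construct a PNE-verification instance whose input profile makes agent $1$'s truthful report yield utility exactly $v - \varepsilon$ (or set the thresholds so that the only possible beneficial deviator is agent $1$), and so that agent $1$ has a profitable deviation in the new instance iff the best response instance is a yes-instance. Then the constructed profile fails to be a PNE iff the best response instance is a yes-instance, giving coNP-hardness of the complement (hence coNP-hardness of PNE verification).

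The main obstacle, and the heart of the reduction, is to ensure that no agent \emph{other} than agent $1$ can profitably deviate in the constructed profile, so that non-PNE can be blamed only on agent $1$. My plan is to inspect the reduction in AGM+15b and add, if necessary, auxiliary ``padding'' agents and houses in the following way: for every non-distinguished agent $i$, I would introduce a private top-ranked house $h^\star_i$ that only agent $i$ desires (in utility terms its value to others is $0$ or much smaller than any relevant gap), and arrange the profile so that agent $i$ already obtains $h^\star_i$ entirely under the truthful profile and gathers essentially all of her remaining mass on uniquely-assigned items. A short case analysis would then show that any deviation by such an agent can only delay her consumption of her private top item or trade it against lower-valued houses, giving no expected-utility gain; hence no other agent can deviate. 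If the AGM+15b construction already involves a two-agent gadget (which is typical), the padding will only need to protect one additional agent, and the reduction will be polynomial. Finally, I would set agent $1$'s utilities so that the numeric threshold $v$ in the best response instance corresponds precisely to the utility agent $1$ obtains under the truthful report in the constructed profile, so that ``a beneficial deviation exists'' translates exactly into ``the best response value exceeds $v$.''
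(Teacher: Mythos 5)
Your overall architecture matches the paper's: membership in coNP via a guessed deviation as a No-certificate, and hardness by piggybacking on the NP-hard expected-utility best-response reduction of AGM+15b, with the crux being to guarantee that no agent other than the distinguished manipulator can profitably deviate. The membership argument is fine. The gap is in the mechanism you propose for neutralizing the non-manipulators, and it is not a cosmetic one.

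Your plan is to add, for each non-distinguished agent $i$, a private top-ranked house $h^\star_i$ and to arrange that $i$ ``gathers essentially all of her remaining mass on uniquely-assigned items.'' This is in tension with the hardness you are trying to preserve: in the AGM+15b construction, agent $1$'s best-response problem is hard precisely because the non-manipulators compete with him for shared houses and their eating schedule determines which fractions he can grab. If you isolate the non-manipulators onto privately consumed items, you remove that competition and the residual best-response instance is no longer the hard one; if you keep the competition, your ``short case analysis'' does not go through, because a non-manipulator who is eating a \emph{shared} house can in principle gain by reordering which shared houses she eats first. Moreover, inserting extra private houses at the top of each non-manipulator's list shifts every finishing time in the PS run, so the allocation agent $1$ can achieve no longer corresponds to the original instance. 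The paper avoids all of this by \emph{not} modifying the agents or houses at all: it only specifies the previously unconstrained utilities of the non-manipulators to be steeply decreasing (utility $(8n)^{m-j+1}$ for the $j$-th ranked house, i.e.\ effectively lexicographic), and then exploits two structural facts about the original reduction --- every non-prize house is eaten by at least $2$ agents, and when any agent finishes a house every other house is at most half eaten --- to show quantitatively that any non-manipulator who delays eating an available preferred house $h$ forfeits at least $1/(4n)$ of $h$, a loss that the lexicographic-scale utilities make impossible to recoup from lower-ranked houses. That quantitative argument tied to the specific structure of the AGM+15b instance is the missing idea; without it (or a worked-out substitute), the claim that only agent $1$ can deviate is unsupported.
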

	
												\begin{proof}[Proof sketch]
									Consider the reduction from 3SAT to an assignment setting from~\citep{AGM+15b,AGM+15c}. We show that checking whether the truthful preference profile is in PNE is coNP-complete. The problem is in coNP, since a Nash deviation is a polynomial time checkable No-certificate.
									The original reduction considers one manipulator (agent 1) while the other agents $N\setminus \{1\}$ are `non-manipulators'. 
							In the original reduction, the utility functions of agents in $N'=N\setminus \{1\}$ are not specified. We specify the utility function of agents in $N'$ as follows: the utility of an agent in $N'$ for his $j$-th most preferred house is ${(8n)}^{m-j+1}$, where $n=|N|$ and $m$ is the number of houses. These utility functions can be represented in space that is polynomial in $O(n+m)$. We rely on 2 main observations about the original reduction.
							First, in the truthful profile, whenever an agent finishes eating a house all houses have either been fully allocated or are only at most half eaten.
							Second, in the truthful profile every house except the prize house (the last house that is eaten) is eaten by at least 2 agents.
							We  now show that due to the utility function constructed, each agent from $N'$ is compelled to report truthfully. 		
 Assume for contradiction that this is not the case, and let us consider the earliest house (when running the PS rule) that some agent $i\in N'$ starts to eat although he prefers another available house $h$. Let $k$ denote the number of agents who eat a fraction of $h$ under the truthful profile. 
						By reporting truthfully, we show that agent $i$ can get $\frac{1/n-1/{2n}}{2}=1/{4n}$  more of $h$ than by delaying eating $h$. 
						Let us consider how much additional fraction of $h$ agent $i$ can consume by reporting truthfully. If he reports truthfully, he can start eating $h$ earlier and, in the worst case, he can only start $1/2n$ time units earlier. This means that $h$ is consumed earlier by a time of $1/2n$ if $i$ reports truthfully. Consider the time interval of length $1/2n$ between the time when $h$ is finished when $i$ is truthful about $h$ and the time $h$ is finished when $i$ delays eating $h$. In this last stretch of time interval $1/2n$, $i$ gets $\frac{1}{k}\cdot \frac{1}{2n}$ of $h$ extra when he does not report truthfully.
						Hence by reporting truthfully, $i$ gets at least $\frac{1/n-1/kn}{2}$ more  of $h$ which is at least $1/4n$ since $k\geq 2$. Due to the utilities constructed, even if $i$ gets all the less preferred houses, he cannot make up for the loss in utility for getting only $1/4n$ of $h$. 
						
Now that we have established that the agents in $N'$ report truthfully in a PNE, it follows that the truthful preference profile is in PNE iff the manipulator's truthful report is his best EU response. Assuming that the  agents in $N\setminus \{1\}$ report truthfully, checking whether the truthful preference is agent $1$'s best response  was shown to be NP-hard.
We have shown that the agents $N'$ report truthfully in a PNE. Hence checking whether the truthful profile is in PNE is coNP-hard.
	\end{proof}

Next, we show that computing a PNE with respect to the underlying utilities of the agents is NP-hard.

%

	\begin{theorem}\label{th:verifyPNE-NPhard}
	Given agent's utilities, computing a preference profile that is in PNE under the PS rule is NP-hard.		
		\end{theorem}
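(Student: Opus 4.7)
The plan is to reduce the EU best-response problem, shown NP-hard in \citet{AGM+15b,AGM+15c}, to the PNE computation problem by reusing the construction and structural analysis from the previous coNP-completeness proof. Concretely, I would take the 3SAT-derived assignment instance and the exponential utility assignment for non-manipulators from that proof --- agent $i \in N \setminus \{1\}$ values his $j$-th favourite house at $(8n)^{m-j+1}$ --- so that the same earliest-deviation calculation shows that in any PNE of the instance, every agent in $N \setminus \{1\}$ reports truthfully.

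With this structural claim in hand, any PNE of the instance has the form (truthful reports for $N \setminus \{1\}$, some preference for agent $1$), and by the PNE condition agent $1$'s reported preference must be an EU best response to the truthful reports of the other agents. Conversely, pairing any EU best response of agent $1$ with truthful reports for the others yields a PNE, since by the forcing argument the non-manipulators have no beneficial deviation. Consequently, a hypothetical polynomial-time algorithm for computing a PNE, applied to this instance, would return a preference profile from which agent $1$'s EU best response can be read off directly, which would contradict the NP-hardness of computing agent $1$'s EU best response under PS.

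The main obstacle is to confirm that the ``non-manipulators report truthfully in any PNE'' conclusion of the previous theorem extends beyond deviations against the truthful profile to arbitrary candidate PNE profiles. The earliest-deviation step of that argument bounds the probability loss on the most preferred available house by $1/(4n)$ and leverages the $8n$-factor utility gap to rule out any profitable deviation; this should carry through whenever the remaining non-manipulators are truthful, because the structural properties it invokes (each non-prize house being eaten by at least two agents, houses being at most half-eaten at the relevant moments) are properties of the PS run determined by the non-manipulators' truthful eating schedule and are insensitive to agent $1$'s report. Once this extension is verified uniformly over all possible reports of agent $1$, the reduction from the best-response problem is immediate.
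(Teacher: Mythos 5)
Your proposal matches the paper's proof: the paper likewise reuses the construction and utility assignment from the coNP-completeness argument to force the agents in $N\setminus\{1\}$ to report truthfully in any PNE, and then observes that a profile is a PNE iff agent $1$ plays an EU best response, so computing a PNE would yield the NP-hard best response. The caveat you raise --- that the forcing argument must hold against arbitrary reports of agent $1$, not just against the truthful profile --- is exactly the point the paper's one-line proof implicitly relies on, and your justification of it is the right one.
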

			\begin{proof}
				The same argument as above shows that the  agents in $N'$ play truthfully in a PNE. Hence, a preference profile is in PNE iff agent $1$ reports his EU best response.
		It has already been shown that computing this EU best response is NP-hard~\citep{AGM+15b} when the other agents are $N\setminus \{1\}$ and report truthfully. 
		Thus computing a PNE is NP-hard.
				\end{proof}

\subsection{Case of Two Agents}

In this section, we consider the case of two agents since many disputes involve two parties. Since an EU best response can be computed in linear time for the case of two agents~\citep{AGM+15b,AGM+15c}, it follows that it can be verified whether a profile is a PNE in polynomial time as well.

We can prove the following theorem for the ``threat profile'' whose construction is shown in Algorithm~\ref{algo:2agent-DL-Nash}.

		\begin{theorem}\label{th:threat}
			Under PS and for two agents, there exists a preference profile that is in DL-Nash equilibrium and results in the same assignment as the assignment based on the truthful preferences. Moreover, it can be computed in linear time.
		\end{theorem}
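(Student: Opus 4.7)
The plan is to give a constructive proof centered on running PS once on the truthful profile and reading off, for each agent, the houses they consume in chronological order. Concretely, I would simulate PS on $(\succ_1,\succ_2)$ in $O(m)$ time and, for each agent $i\in\{1,2\}$, record the list $L_i=(h_i^1,\ldots,h_i^{r_i})$ of houses that agent $i$ starts eating, in the order they are started. Algorithm~\ref{algo:2agent-DL-Nash} then assembles the threat preference $\succ_i'$ by placing $L_i$ as the initial block of $\succ_i'$ and appending the remaining houses in an order tuned so that agent $i$ would immediately attack houses appearing in $L_{-i}$ upon any deviation by the opponent. Since PS for two agents runs in $O(m)$ time, the whole procedure is linear.

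I would then prove allocation invariance by induction on the ordered sequence of finishing events. Before the first finish, the top-ranked available house under $\succ_i'$ is $h_i^1$, which is exactly what agent $i$ starts on in the truthful run. Inductively, at any event in the truthful run, the next top-ranked available house for agent $i$ under $\succ_i'$ is the next element of $L_i$, and by the induction hypothesis this is also the top-ranked available house under $\succ_i$ at the same time. Consequently the two runs of PS produce identical sequences of eating actions and hence identical allocations.

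The main step is to verify that $(\succ_1',\succ_2')$ is a DL-Nash equilibrium. Suppose for contradiction that agent $1$ has a DL-profitable deviation $\succ_1''$ against $\succ_2'$; the argument for agent $2$ is symmetric. Let $h^\ast$ be the $\succ_1$-most-preferred house whose share strictly increases under $(\succ_1'',\succ_2')$ relative to the truthful allocation. By the definition of DL, every house $\succ_1$-preferred to $h^\ast$ must retain exactly its truthful probability. To pick up extra mass of $h^\ast$, agent $1$ must begin eating $h^\ast$ strictly earlier than in the truthful run, which in turn forces him to abandon or delay some house $h'$ that he truthfully eats with $h'\succ_1 h^\ast$. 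Because $\succ_2'$ promotes the houses of $L_1$ near the top of agent $2$'s preference, as soon as agent $1$ leaves $h'$ agent $2$ starts eating $h'$ (or another house in $L_1$) at unit rate, strictly reducing agent $1$'s final share of some house $\succ_1$-preferred to $h^\ast$. This contradicts the choice of $h^\ast$ as the most-preferred house whose share strictly increased.

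The hard part will be making this cascade argument watertight: a single change in $\succ_1''$ can trigger a complex rescheduling of both agents' consumption, and I must show that the embedded threat in $\succ_2'$ always translates into a strict loss for agent $1$ on some house $\succ_1$-preferred to any house on which he gains. The crux is to prove that prefixing $\succ_2'$ with $L_1$, and symmetrically $\succ_1'$ with $L_2$, forces the truthful PS schedule to be simultaneously each agent's DL best response against the opponent's threat; each agent's reported preferences in effect ``mirror'' the other agent's truthful consumption plan, so no unilateral deviation can avoid losing probability on some strictly higher-ranked house.
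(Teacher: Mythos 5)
There is a genuine gap, and it lies in the construction itself, not just in the admittedly incomplete ``cascade'' argument. You build $\succ_i'$ by placing $L_i$ (the houses agent $i$ eats truthfully, in order) as a prefix and relegating the opponent's exclusively-eaten houses to a tail. This preserves the allocation, but it places the threats too late to be credible, and the resulting profile need not be a DL-Nash equilibrium. Concretely, take $\succ_1: h_1, h_2, h_3$ and $\succ_2: h_2, h_3, h_1$. Truthfully, agent $1$ gets all of $h_1$ and half of $h_3$, so $L_1=(h_1,h_3)$, $L_2=(h_2,h_3)$, and your construction forces $\succ_2'=(h_2,h_3,h_1)$ (the tail is a single house, so no ``tuning'' is available). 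Against $\succ_2'$, agent $1$ can report $(h_2,h_1,h_3)$: he splits $h_2$ until time $1/2$, after which agent $2$ moves to $h_3$ rather than punishing on $h_1$, so agent $1$ still eats all of $h_1$ and ends with $(h_1{:}\,1,\ h_2{:}\,1/2,\ h_3{:}\,0)$, which DL-dominates his truthful allocation $(1,0,1/2)$. So $(\succ_1',\succ_2')$ is not an equilibrium.

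The paper avoids this by \emph{interleaving} the threats rather than appending them: in each round it takes the two agents' current top houses $h$ and $h'$ and, when $h\neq h'$, reports $Q_1=\ldots,h,h',\ldots$ and $Q_2=\ldots,h',h,\ldots$ before moving to the next round. In the example this yields $Q_2=(h_2,h_1,h_3)$, so the moment agent $1$ abandons $h_1$, agent $2$ switches to $h_1$ immediately after finishing $h_2$, and the deviation above only nets agent $1$ the allocation $(1/2,1/2,1/2)$, which is DL-worse. The correctness proof then becomes a clean induction over rounds (common top house: neither agent gains by delaying it; distinct top houses: each agent must keep his own first or lose probability on it to the opponent's embedded second choice), which is exactly the step you flagged as the ``hard part'' and left open. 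To repair your proof you would need to abandon the prefix-plus-tail layout and adopt the round-by-round interleaving, at which point you essentially recover the paper's argument.
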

		\begin{proof}
			The proof is by induction over the length of the constructed preference lists.
			The main idea of the proof is that if both agents compete for the same house then they do not have an incentive to delay eating it. If the most preferred houses do not coincide, then both agents get them with probability one but will not get them completely if they delay eating them. The algorithm is described as Algorithm~\ref{algo:2agent-DL-Nash}.
		


		We now prove that $Q_1$ is a DL best response against $Q_2$ and $Q_2$ is a DL best response against $Q_1$. The proof is by induction over the length of the preference lists.
		For the first elements in the preference lists $Q_1$ and $Q_2$, if the elements coincide, then no agent has an incentive to put the element later in the list since the element is both agents' most preferred house. If the maximal elements do not coincide i.e. $h\neq h'$, then $1$ and $2$ get $h$ and $h'$ respectively with probability one. However they still need to express these houses as their most preferred houses because if they don't, they will not get the house with probability one. The reason is that $h$ is the next most preferred house after $h'$ for agent $2$ and $h'$ is the next most preferred house after $h$ for agent $1$. Agent $1$ has no incentive to change the position of $h'$ since $h'$ is taken by agent $2$ completely before agent $1$ can eat it. Similarly, agent $2$ has no incentive to change the position of $h$ since $h$ is taken by agent $1$ completely before agent $2$ can eat it.
		Now that the positions of $h$ and $h'$ have been completely fixed, we do not need to consider them and can use induction over $Q_1$ and $Q_2$ where $h$ and $h'$ are deleted.
			\end{proof}

		The desirable aspect of the threat profile is that since it results in the same assignment as the assignment based on the truthful preferences, the resulting assignment satisfies all the desirable properties of the PS outcome with respect to the original preferences. 
	Since a DL best response algorithm is also an EU best response algorithm for the case of two agents~\citep{AGM+15c}, we get the following corollary.

		\begin{corollary}
			Under PS and for 2 agents, there exists a preference profile that is in Nash equilibrium for any utilities consistent with the ordinal preferences. Moreover it can be computed in linear time.
		\end{corollary}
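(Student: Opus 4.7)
The plan is to obtain the corollary as an essentially free consequence of Theorem~\ref{th:threat} by lifting the conclusion from the DL relation to any cardinal utility representation. First, I would run Algorithm~\ref{algo:2agent-DL-Nash} on the truthful preferences to produce the threat profile $(Q_1,Q_2)$, which by Theorem~\ref{th:threat} is a DL-Nash equilibrium, is computable in linear time, and yields the same assignment as the truthful profile.

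Second, I would argue that $(Q_1,Q_2)$ is also a pure Nash equilibrium under expected utility for every cardinal utility function consistent with the ordinal preferences. The single ingredient I need is the statement flagged just before the corollary: for two agents, the same algorithm that computes a DL best response also computes an EU best response, regardless of the underlying consistent utilities~\citep{AGM+15c}. Concretely, an agent's DL-optimal reply against a fixed opponent report is simultaneously EU-optimal for every utility vector refining the ordinal preferences. Applying this observation to each agent in turn: since $Q_1$ is a DL best response to $Q_2$, it is an EU best response to $Q_2$; symmetrically $Q_2$ is an EU best response to $Q_1$. Hence neither agent has a profitable EU deviation, which is exactly the PNE condition.

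The only potentially delicate point, and thus the main obstacle, is stating the DL-to-EU lifting cleanly. What one really needs is the asymmetric direction: every \emph{DL best response} (not merely every DL-maximal allocation) is an \emph{EU best response} under any consistent utility. This is precisely what is established in \citep{AGM+15c} via the structural analysis of two-agent best responses, so no additional work is required. The linear-time bound is inherited directly from Theorem~\ref{th:threat}, and the PNE conclusion then holds uniformly across all utility profiles consistent with the agents' ordinal rankings, completing the corollary.
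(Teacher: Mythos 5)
Your proposal is correct and follows essentially the same route as the paper: the paper derives the corollary in one line from Theorem~\ref{th:threat} together with the cited fact that, for two agents, a DL best response algorithm is also an EU best response algorithm for any consistent utilities~\citep{AGM+15c}. Your additional care in noting that the needed direction is ``every DL best response is an EU best response'' is a reasonable clarification but does not change the argument.
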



		\begin{algorithm}[tb]
			  \caption{Threat profile DL-Nash equilibrium for $2$ agents (which also is an EU-Nash equilibrium) which provides the same allocation as the truthful profile.}
			  \label{algo:2agent-DL-Nash}
			\renewcommand{\algorithmicrequire}{\wordbox[l]{\textbf{Input}:}{\textbf{Output}:}}
			 \renewcommand{\algorithmicensure}{\wordbox[l]{\textbf{Output}:}{\textbf{Output}:}}
			 \footnotesize
				\textbf{Input:}  $(\{1,2\},H,(\succ_1,\succ_2))$\\
				\textbf{Output:} The \emph{``threat profile''} $(Q_1,Q_2)$ where $Q_i$ is the preference list of agent $i$ for $i\in \{1,2\}$.
			\algsetup{linenodelimiter=\,}
			  \begin{algorithmic}[1]
				   \footnotesize
		\STATE Let $P_i$ be the preference list of agent $i\in \{1,2\}$
		\STATE
		Initialise $Q_1$ and $Q_2$ to empty lists.
		\WHILE{$P_1$ and $P_2$ are not empty}
		\STATE Let $h= \text{first}(P_1)$ and $h'=\text{first}(P_2)$
		\STATE Append $h$ to $Q_1$; Append $h'$ to $Q_2$
		\STATE Delete $h$ and $h'$ from $P_1$ and $P_2$
		\IF{$h\neq h'$}
		\STATE Append $h'$ to $Q_1$; Append $h$ to $Q_2$;
		\ENDIF
		\ENDWHILE	

		\RETURN $(Q_1,Q_2)$.

			 \end{algorithmic}
			\end{algorithm}

\begin{figure*}[ht!]
\centering
\includegraphics[width=0.9\textwidth]{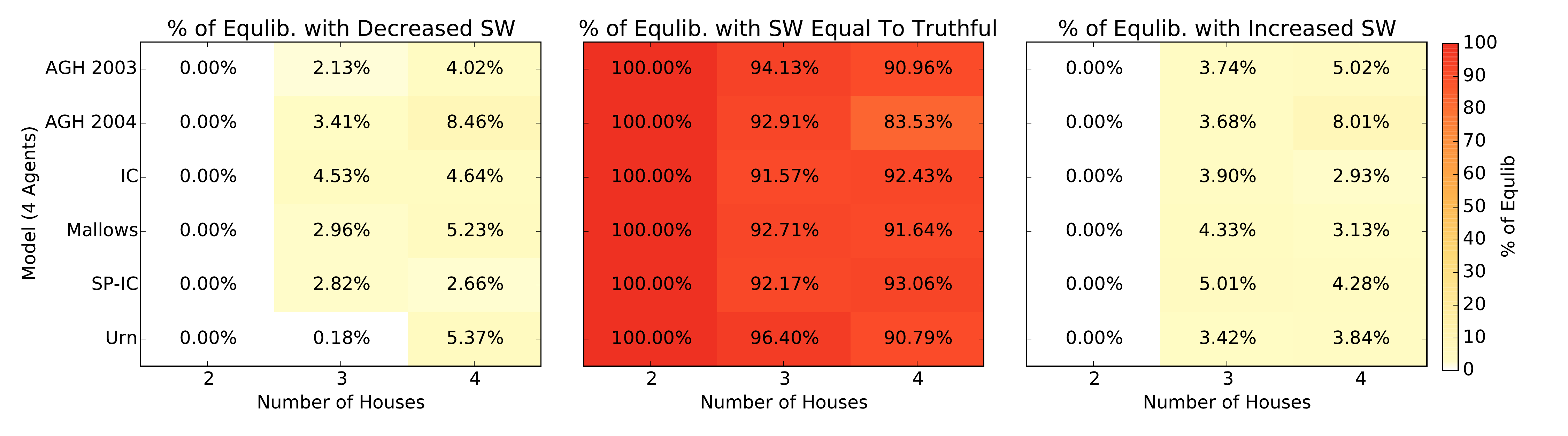}
\caption{Classification of equilibria 
in settings consisting of four agents with preferences drawn from the six models over 2 to 4 houses.
We can see that the vast majority of the equilibria that were 
found across all samples had the same social welfare as the
truthful profile. In general, we see there are roughly the same number of
equilibria that increase or decrease social welfare.}
\label{fig:percent-change}
\end{figure*}

\begin{figure*}[ht!]
\begin{minipage}[b]{0.65\textwidth}
	\centering
	\includegraphics[width=.9\textwidth]{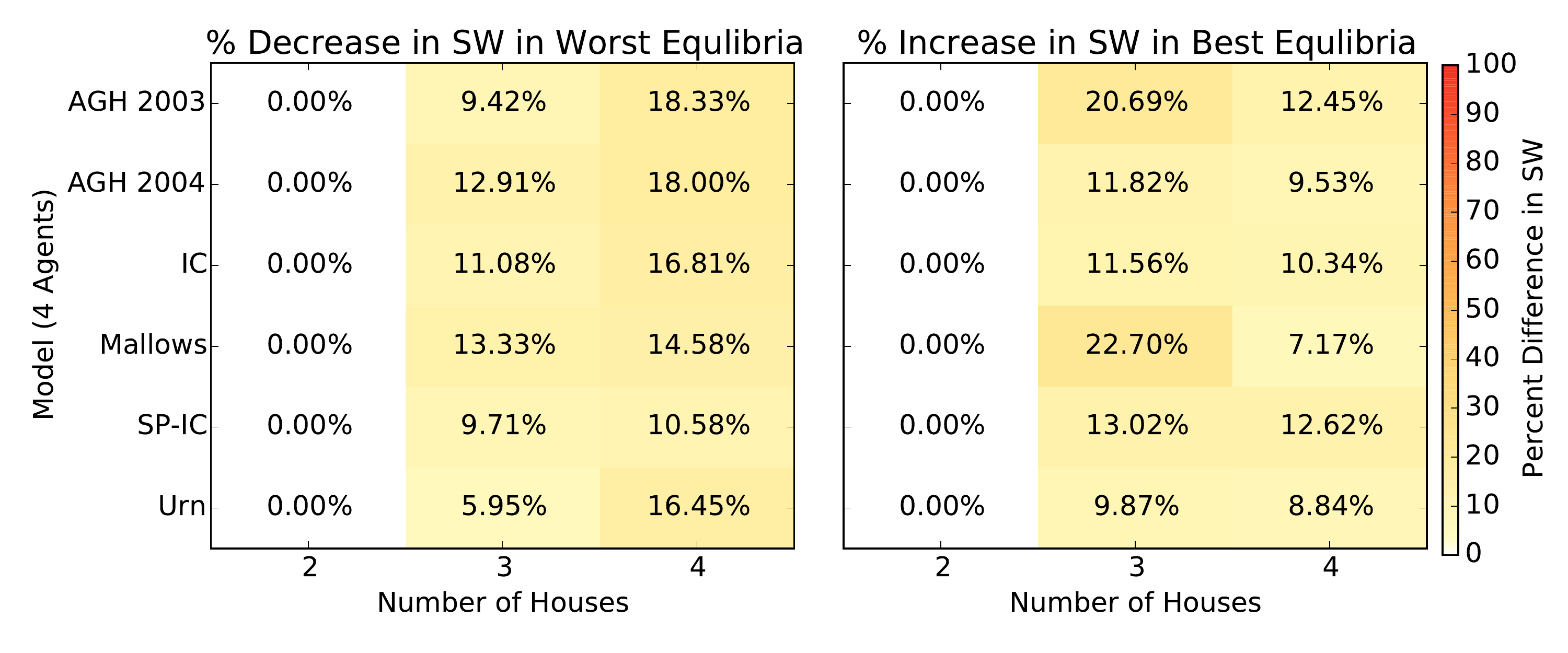}
	(A)
\end{minipage}
\hfill
\begin{minipage}[b]{0.33\linewidth}
	\centering
	\includegraphics[width=.9\textwidth]{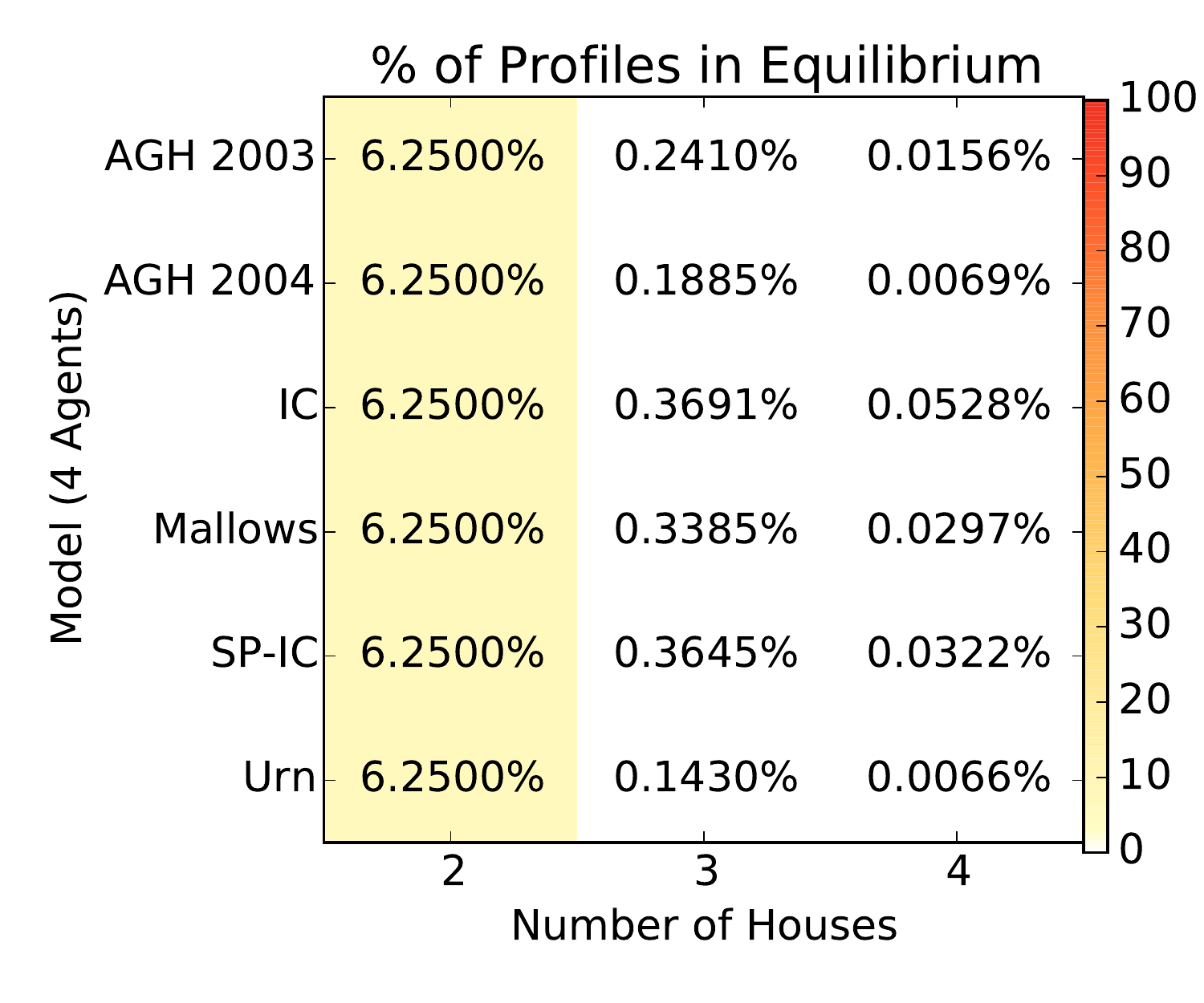}
	(B)
\end{minipage}	
\caption{(A) The maximum and minimum percentage increase 
or decrease in social welfare over all 300 samples for each of the six models with
2 to 4 houses and four agents.  We see 
that the gain of the best profile is, in general, slightly
more than the loss in the worst profile with respect to the truthful profile. (B) The average
number of the $4!^4=331,776$ profiles that are in 
equilibria per instance across all combinations of parameters.  The more uncorrelated
models (IC and SP-IC) admit the highest number of equilibria.}
\label{fig:increase-num}
\end{figure*}

\section{Experiments}


We conducted a series of experiments to understand the number and quality
of equilibria that are possible under the PS rule. 
For quality, we use the utilitarian social welfare (SW) function, i.e., the sum of the utilities of the agents.
We are
limited by the large search space needed to examine equilibria.  
For instance, for each set of cardinal preferences we generate, we
consider all misreports ($m!$) for all agents ($n$) leaving us with a search 
space of size $m!^n$ for each of the samples for each combination of parameters. 
Thus, we only report results for small numbers of agents and houses in this section.  
We generated 300 samples for each combination 
of preference model, number of agents, and number of items; reporting
the aggregate statistics for these experiments for 4 agents 
in Figures~\ref{fig:percent-change} and ~\ref{fig:increase-num}.
Each individual sample with 4 agents and 4 houses took about 15 minutes to complete using one core on
an Intel Xeon E5405 CPU running at 2.0 GHz with 4 GB of RAM running Debian 2.6.32.
The results for 2 agents and up to 5 houses as well as 3 agents and up to 4 houses are similar.

We used a variety of common statistical models to generate data (see, e.g.,~\cite{Matt11a,Mall57a,LuBo11a,Berg85a}):
the Impartial Culture (\textbf{IC}) model generates all 
preferences uniformly at random; the Single Peaked Impartial Culture
(\textbf{SP-IC}) generates all preference profiles that are single peaked
uniformly at random; Mallows Models (\textbf{Mallows}) is a correlated
preference model where the population is distributed around a 
reference ranking proportional to the Kendall-Tau distance; 
Polya-Eggenberger Urn Models (\textbf{Urn}) creates correlations
between the agents, once a preference order has been randomly
selected, it is subsequently selected with higher probability.
In our experiments we set the probability that the second order
is equivalent to the first to 0.5.
We also used real world data from
\textsc{PrefLib}
\cite{MaWa13a}: AGH Course Selection (ED-00009). This data 
consists of students bidding on courses to attend in 
the next semester.  We sampled students
from this data (with replacement) as the agents after 
we restricted the preference 
profiles to a random set of houses of a specified size. 

To compare the different allocations achieved under PS we need 
to give each agent not only a preference order but also a utility for each
house.  Formally we have, for all $i \in N$ and all $h_j \in H$, a value $u_i(h_j) \in \reals$.
To generate these utilities we use what we call the \emph{Random} model: 
we uniformly at random generate a real number between $0$ and 
$1$ for each house.  We sort this list in strictly decreasing order, if we cannot, 
we generate a new list (we discarded 0 lists in our experiments). 
We normalize these utilities such that each agent's utility sums to a constant value (here, the number of houses)
that is the same for all agents. In prior experiments we 
found the Random utility model to be the most manipulable and admit
the worst equilibria.
Therefore, we only focus on this utility model here 
(over Borda or Exponential utilities) as it represents, empirically, a worst case.
We separate equilibria into three categories: those where the SW is the same
as in the truthful profile, those where we have a decrease in SW, 
and those where we have an increase in SW.  
%
Given the social 
welfare of two different profiles, $SW_1$ and $SW_2$, we use percentage
change ($\frac{|SW_1 - SW_2|}{SW_1}\cdot 100$) to understand the magnitude of this difference.

For all models, for all combinations of 2 to 4 agents and 2 to 5 houses there are, generally, 
slightly more equilibria that increase social welfare compared to the truthful profile than those 
that decrease it, as illustrated in Figure~\ref{fig:percent-change}.  However, the vast majority
of equilibria have the same social welfare as the truthful profile, and the 
best equilibria are, in general, slightly better than the worst
equilibria, as illustrated in Figure~\ref{fig:increase-num}.  Hence, if any or all of the 
agents manipulate, there may be a loss of SW at equilibria,
but there is also the potential for large gains; and the most common outcome of all these 
agents being strategic is that, dynamically, we will wind up in an equilibria 
which provides the same SW as the truthful one.  
Our main observations are:
\begin{inparaenum}[(i)]
\item The vast majority of equilibria have social welfare
equal to the social welfare in the truthful profile.
\item In general, the number of PNE that have
increased social welfare (with respect to the truthful profile) is slightly more than the  
number of PNE that have decreased social welfare.
\item The maximum increase and decrease in SW in equilibria compared to the truthful profile was observed to be under 23\% and 18\% respectively .
\item There are very few profiles that are in equilibria, overall.  Profiles
with relatively high degrees of correlation between the preferences (Urn and AGH 2004) have fewer equilibrium profiles than the less correlated
models (IC and SP-IC).
\item These trends appear stable with small numbers of agents and houses.
We observed similar results for all combinations.
\end{inparaenum}

\section{Conclusions}

We conducted a detailed analysis of strategic aspects of the PS rule including the complexity of computing and verifying PNE.
The fact that PNE are computationally hard to compute in general may act as a disincentive or barrier to strategic behavior.
Our experimental results show PS is relatively robust, in terms of social welfare, even in the presence of strategic behaviour. 
Our study leads to a number of new research directions. 
It will be interesting to extend our algorithmic results to the extension of PS for indifferences~\cite{KaSe06a}.
Additionally, studying \emph{strong} Nash equilibria and a deeper analysis of Nash dynamics are other interesting directions.

\section{Acknowledgments}
NICTA is funded by the Australian Government through the Department of Communications and the Australian Research Council through the ICT Centre of Excellence Program. Serge Gaspers is also supported by the Australian Research Council grant DE120101761.

\bibliographystyle{named}
%

\end{document}